\newtheorem{theorem}{Theorem}[section]
\newtheorem{corollary}[theorem]{Corollary}
\newtheorem{lemma}[theorem]{Lemma}
\newtheorem{definition}[theorem]{Definition} \numberwithin{equation}{subsection}
\newcommand{\opseg}{\ensuremath{\mathtt{segment}}}
\newcommand{\opcol}{\ensuremath{\mathtt{color}}}
\newcommand{\drzewo}{\ensuremath{\mathcal{D}}}
\newcommand{\dom}{\ensuremath{\mathrm{dom}}}
\date{}
\author{Marek Cygan and Marcin Pilipczuk \\
\texttt{\{cygan,malcin\}@mimuw.edu.pl} \\
 \\
Department of Mathematics, Computer Science and Mechanics, \\
University of Warsaw, Warsaw, Poland\\
}
\title{Even Faster Exact Bandwidth
\thanks{
This research is partially supported by a grant from the Polish Ministry of Science and Higher
Education, project N206 005 32/0807.  }
}
\begin{document}
\maketitle

\begin{abstract}
We deal with exact algorithms for {\sc Bandwidth}, a long studied
NP-hard problem. For a long time nothing better than the trivial
$O^*(n!)$\footnote{By $O^*$ we denote standard big $O$ notation but omitting polynomial factors.}
exhaustive search was known. In 2000, Feige an Kilian~\cite{feige:exp}
came up with a $O^*(10^n)$-time algorithm. Recently we presented
algorithm that runs in $O^*(5^n)$ time and $O^*(2^n)$ space. 

In this paper we present a major modification to our algorithm which makes it run in $O(4.83^n)$ time
with the cost of $O^*(4^n)$ space complexity. This modification allowed us to perform Measure \& Conquer
analysis for the time complexity which was not used for such types of problems before.
\end{abstract}

\section{Introduction}

In this paper we focus on exact exponential-time algorithms for the {\sc Bandwidth} problem.
Let $G = (V,E)$ be an undirected graph, where $n = |V|$ and $m = |E|$. For a given one-to-one function
$\pi: V \to \{1,2\ldots,n\}$ (called {\it ordering}) its {\it bandwidth}   is the maximum difference
between positions of adjacent vertices, i.e. $\max_{uv \in E}|\pi(u)-\pi(v)|$. The {\it bandwidth} of the graph,
denoted by ${\rm bw}(G)$, is the minimum bandwidth over all orderings. The {\sc Bandwidth} problem asks
to find an ordering with bandwidth ${\rm bw}(G)$. One can consider a decision version of the {\sc Bandwidth} problem.
Precisely, we assume that the input for our problem contains
additionally an integer $b$, $1 \le b < n$. An ordering of $V$ with bandwidth at most $b$ is called a {\em
$b$-ordering}. In the decision version we ask if there exists a $b$-ordering and if that is the
case, finding it. 

A very short summary of what is known about the {\sc Bandwidth} problem follows. 
On special families of graphs ${\rm bw}(G)$ can be computed in polynomial time~\cite{assman,kleitman}. However,
in general {\sc Bandwidth} is NP-hard even on some subfamilies of trees~\cite{Garey,Monien}. Moreover
Unger~\cite{unger:bandwidth} showed that {\sc Bandwidth} problem does not belong to APX even in a
very restricted case when $G$ is a caterpillar, i.e. a very simple tree.
It is also hard for any fixed level of the W hierarchy~\cite{fellows:hardness}.
The best known polynomial-time
approximation, due to Feige \cite{feige-bandwidth-approx}, has $O(\log ^3n\sqrt{\log n \log \log n})$ approximation guarantee.
On WG 2008 we presented~\cite{naszewg} exact algorithm that runs in $O^*(5^n)$ time and $O^*(2^n)$ space.

In this paper we enhance the previous algorithm to make it run in $O(4.83^n)$ time. However, the cost of this
change is $O^*(4^n)$ space complexity, which makes this result purely theoretical. In Section \ref{s:algorytm} we
describe the enhanced algorithm. In Section \ref{s:analiza} we do Measure \& Conquer analysis
(method introduced by Fomin et al. in \cite{fgk:mis}) to obtain $O(4.83^n)$ time bound. We find this analysis
interesting, because Measure \& Conquer method at the first glance does not fit for the {\sc Bandwidth} problem at all.

\section{The algorithm}\label{s:algorytm}

The algorithm consists of two phases. First, we generate partial assignments of vertices to positions: 
we do not assign precise position to a vertex, but a segment of length $2(b+1)$ or $4(b+1)$ of possible positions. We do this in every possible way.
In the second phase for every generated segment assignment we check whether there exists a precise assignment of vertices to positions (i.e., an ordering), consistent
with the partial assignment. 

From this point we assume, that the graph $G$ has at least two vertices 
and it is connected (if $G$ is not connected we may find $b$-orderings 
of each connected component of $G$ in an independent manner).
Let us choose any (but fixed for the whole algorithm) spanning tree $\drzewo$ of the graph $G$.

\subsection{First phase: generating segment assignments}

\begin{definition}\label{def:segment}
A {\em{segment}} is a nonempty set of consecutive positions which has form of $\{i(b+1)+1, i(b+1)+2, \ldots, j(b+1)\} \cap \{1,2,\ldots, n\}$ for
some integers $i < j$. We say that this segment has index $(i, j)$ and denote it as $\Theta_{(i,j)}$.
For the sake of simplicity we define $\Theta_i = \Theta_{(i,i+1)}$ and call such segments {\em{base segments}}.
\end{definition}

\begin{definition}
A {\em{segment assignment}} is a function $\phi$ assigning a segment to every vertex such that the following conditions hold:
\begin{enumerate}
\item Every leaf of the spanning tree $\drzewo$ is assigned to a segment of size $4(b+1)$, i.e., segment $\Theta_{(i, i+4)}$ for some integer $i$.
\item Every inner vertex of $\drzewo$ is assigned to a segment of size $2(b+1)$, i.e., segment $\Theta_{(i, i+2)}$ for some integer $i$.
\item For every edge $uv$ in $\drzewo$, if vertex $u$ is the parent of the vertex $v$ and $v$ is an inner vertex, where $\phi(u) = \Theta_{(i, i+2)}$ and $\phi(v) = \Theta_{(j, j+2)}$,
  then $i$ and $j$ differ by exactly one.
\item For every leaf $v$, if $u$ is the parent of $v$ in $\drzewo$ and $\phi(v) = \Theta_{(i, i+2)}$, then $\phi(u) = \Theta_{(i-1, i+3)}$. 
\end{enumerate}
We say that a segment assignment is {\em{consistent}} with an ordering $\pi$ if for every vertex $v$ position $\pi(v)$ belongs to the segment $\phi(v)$. 
\end{definition} 
 
\begin{lemma}\label{dalekie}
Let $\pi$ be a $b$-ordering. In any segment assignment $\phi$ consistent with the ordering $\pi$, for every edge $uv$, if
$\phi(u) = \Theta_{(i, j)}$ and $\phi(v) = \Theta_{(k, l)}$ then $j \geq k$ and $l \geq i$. 
\end{lemma}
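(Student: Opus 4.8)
The plan is to convert membership in a segment into plain numerical bounds on positions and then play those bounds against the hypothesis that $\pi$ is a $b$-ordering. The first observation is that the statement is symmetric: the pair of inequalities $j \ge k$ and $l \ge i$ is exactly what one gets by writing the single inequality ``right index of the first segment $\ge$ left index of the second segment'' once for the ordered pair $(u,v)$ and once for the ordered pair $(v,u)$. Hence it suffices to prove $j \ge k$; the inequality $l \ge i$ then follows by exchanging the roles of $u$ and $v$ (equivalently, of $\Theta_{(i,j)}$ and $\Theta_{(k,l)}$).

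To prove $j \ge k$, I would argue by contradiction. Suppose $j < k$; since the indices are integers this means $j \le k-1$. By consistency of $\phi$ with $\pi$ we have $\pi(u) \in \Theta_{(i,j)}$ and $\pi(v) \in \Theta_{(k,l)}$, so from Definition~\ref{def:segment} we extract the one-sided bounds $\pi(u) \le j(b+1) \le (k-1)(b+1)$ and $\pi(v) \ge k(b+1)+1$. Subtracting gives
\[
\pi(v) - \pi(u) \ge \bigl(k(b+1)+1\bigr) - (k-1)(b+1) = (b+1)+1 = b+2 > b,
\]
so $|\pi(u)-\pi(v)| > b$. But $uv \in E$ and $\pi$ is a $b$-ordering, whence $|\pi(u)-\pi(v)| \le b$, a contradiction. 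Therefore $j \ge k$, and the symmetric argument yields $l \ge i$.

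The one point to check carefully is that the intersection with $\{1,\dots,n\}$ appearing in Definition~\ref{def:segment} does not weaken the two estimates above. This is harmless: every element of $\{i(b+1)+1,\dots,j(b+1)\}\cap\{1,\dots,n\}$ is in particular at most $j(b+1)$, and every element of $\{k(b+1)+1,\dots,l(b+1)\}\cap\{1,\dots,n\}$ is at least $k(b+1)+1$, so the clamping can only shrink the segments and never invalidates these one-sided bounds. I expect this boundary bookkeeping to be the only mildly delicate point; the core of the argument is the single length computation displayed above. Finally, note that the proof invokes only consistency of $\phi$ with $\pi$ together with the edge constraint $|\pi(u)-\pi(v)| \le b$, so it applies to \emph{every} edge of $G$ and not merely to edges of the spanning tree $\drzewo$; conditions (1)--(4) in the definition of a segment assignment play no role here.
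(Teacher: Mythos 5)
Your proof is correct and follows essentially the same route as the paper's: the paper also argues by contradiction that $j < k$ forces a gap of at least $b+1$ positions between the two segments, making the edge $uv$ longer than $b$, and handles $l \ge i$ by symmetry. Your version merely makes the arithmetic ($\pi(v)-\pi(u) \ge b+2$) and the harmlessness of the intersection with $\{1,\dots,n\}$ explicit.
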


\begin{proof}
If $j < k$, then there is a gap of size at least $b+1$ between $\phi(u)$ and $\phi(v)$. Since $\pi$ is consistent with
the segment assignment, edge $uv$ is longer than $b$, contradiction. Similar argument proves that $l \geq i$.
\end{proof}

\begin{lemma}\label{istniejasegmenty}
Let $\pi$ be a $b$-ordering. There exists a segment assignment consistent with $\pi$.
\end{lemma}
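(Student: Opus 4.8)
The plan is to read a consistent segment assignment straight off the ordering $\pi$, using the base segments as a coordinate system and a parity argument on the depth in $\drzewo$ to force the parent--child conditions. For each vertex $v$ I would let $c(v)$ be the index of the base segment containing $\pi(v)$, that is, the unique integer with $\pi(v)\in\Theta_{c(v)}$. The single fact on which everything rests is that the base index changes by at most one along every edge: because $\pi$ is a $b$-ordering we have $|\pi(u)-\pi(v)|\le b$ for each edge $uv$, whereas any two positions lying in base segments whose indices differ by at least two are at distance at least $b+2$; hence $|c(u)-c(v)|\le 1$ for every edge, and in particular for every edge of $\drzewo$.

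Next I would root $\drzewo$ at an inner vertex --- one exists as soon as $\drzewo$ has at least three vertices, the two-vertex case being trivial --- and let $d(v)$ denote depth. To each inner vertex $v$ I assign the size-$2(b+1)$ segment $\phi(v)=\Theta_{(s(v),s(v)+2)}$, where $s(v)$ is the unique member of $\{c(v)-1,c(v)\}$ whose parity equals $d(v)\bmod 2$; such a member exists and is unique since those two integers are consecutive. Because $s(v)\le c(v)\le s(v)+1$, the base segment $\Theta_{c(v)}$ sits inside $\phi(v)$, so $\pi(v)\in\phi(v)$ and consistency holds on inner vertices, while conditions~1--2 hold by construction.

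The main obstacle is condition~3, and it is exactly what the depth-parity choice is built to defeat. For an edge between an inner parent $u$ and an inner child $v$ the parity rule gives $s(u)\equiv d(u)$ and $s(v)\equiv d(u)+1\pmod 2$, so $s(u)-s(v)$ is odd; on the other hand the two allowed choices force $s(u)-s(v)\in\{c(u)-c(v)-1,\,c(u)-c(v),\,c(u)-c(v)+1\}$, which by the first paragraph lies in $\{-2,\dots,2\}$. The only odd values there are $\pm 1$, so $|s(u)-s(v)|=1$ automatically. This is the crux: moving base indices by at most one across an edge turns an a priori ambiguous difference into the precise value demanded by condition~3.

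Finally I would place the leaves. For a leaf $v$ with its (necessarily inner) parent $u$, writing $\phi(u)=\Theta_{(i,i+2)}$, I set $\phi(v)=\Theta_{(i-1,i+3)}$, the size-$4(b+1)$ segment extending $\phi(u)$ by one base segment on each side; this realises the leaf--parent relation of condition~4 and gives condition~1. Consistency follows once more from $|c(u)-c(v)|\le 1$: since $i\in\{c(u)-1,c(u)\}$ we get $c(u)\in\{i,i+1\}$, hence $c(v)\in\{i-1,\dots,i+2\}$, so $\pi(v)$ lies in one of the four base segments constituting $\Theta_{(i-1,i+3)}$. What remains is only bookkeeping, together with a direct check of the degenerate two-vertex tree, where both vertices are leaves and the parent--child conditions are vacuous.
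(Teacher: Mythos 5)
Your proof is correct and ends up building essentially the same assignment as the paper: the paper constructs $\phi$ greedily in root-to-leaf order, offering a child of a vertex at $\Theta_{(i,i+2)}$ only the choices $\Theta_{(i-1,i+1)}$ and $\Theta_{(i+1,i+3)}$ (respectively $\Theta_{(i-1,i+3)}$ for a leaf), so that condition~3 holds by construction and consistency follows from $\pi(u)\in\Theta_{(i-1,i+3)}$. Your depth-parity rule is just a closed-form description of that recursion, with the parity-plus-range argument replacing the inductive guarantee; both hinge on the same fact that a $b$-ordering changes the base-segment index by at most one along every edge.
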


\begin{proof}
As a proof we present simple construction of the segment assignment. Let $u_0$ be the root of $\drzewo$. Lets assign it to any segment of length $2(b+1)$ containing $\pi(u_0)$ (there
are exactly two possible segments). Then assign segments to vertices in the root-to-leaf order. Let $u$ be an unassigned vertex with parent $v$ and let 
$T_{(i, i+2)}$ be the segment assigned to $v$. Since $i(b+1) + 1 \leq \pi(v) \leq (i+2)(b+1)$ and $\pi$ is a $b$-ordering,
$$\pi(u) \in \Theta_{(i-1, i+3)} = \Theta_{(i-1, i+1)} \cup \Theta_{(i+1, i+3)}.$$
We can assign $v$ to $\Theta_{(i-1, i+3)}$ or to one of the segments $\Theta_{(i-1, i+1)}$ and $\Theta_{(i+1, i+3)}$, containing $\pi(u)$, depending
whether $v$ is an inner vertex or a leaf.
\end{proof}

Our goal in the first phase is to generate a set of segment assignments such that for every $b$-ordering there exists generated segment assignment
consistent with it. In the second phase we check for every segment assignment whether consistent $b$-ordering exists. As a result we
check if there exists any $b$-ordering of the given graph $G$. 

The first phase is as follows:
\begin{enumerate}
\item Assign root $u_0$ of the tree $\drzewo$ to any valid segment of size $2(b+1)$.
\item Recursively assign other vertices in the root-to-leaf order in tree $\drzewo$. Given vertex $u$ with parent $v$ assigned to segment $\Theta_{(i, i+2)}$ assign $u$
to one of the segments $\Theta_{(i-1, i+1)}$, $\Theta_{(i+1, i+3)}$ if $u$ is an inner vertex, or to $\Theta_{(i-1, i+3)}$ when $u$ is a leaf (see Figure~\ref{rys:podzialy_podwojne}).\label{alg:rek}
\item For every generated assignment check condition from Lemma \ref{dalekie}: in other words accept assignment iff for every edge $uv$ in $G$ segments
assigned to vertices $u$ and $v$ are not too far from each other.
\end{enumerate}

\begin{figure}[htp]
\begin{center}
\begin{tikzpicture}
    
    \definecolor{gray1}{rgb}{0.7,0.7,0.7}
    \definecolor{gray2}{rgb}{0.3,0.3,0.3}
    \fill[gray1] (0,0) rectangle (6,0.5);
    \fill[gray2] (6,0) rectangle (12,0.5);
    \draw[very thin,step=.5cm] (0,0) grid (6,0.5);
    \draw[very thin,step=.5cm] (6,0) grid (12,0.5);
    \draw[very thin,step=.5cm] (3,1.5) grid (9,2);
    \draw[very thin] (3,1.5) -- (3,2);
    \draw[very thin] (3,1.5) -- (9,1.5);

    \foreach \x in {3cm, 6cm, 9cm}
    {
      \draw[very thick] (\x,1.2cm) -- (\x,2.3cm);
    }

    \foreach \x in {0cm, 3cm, 6cm, 9cm, 12cm}
    {
      \draw[very thick] (\x,-0.3cm) -- (\x,0.8cm);
    }
    \draw[thin, dashed,<->,stealth-stealth] (3,2.3) -- (6,2.3);
    \draw[thin, dashed,<->,stealth-stealth] (6,2.3) -- (9,2.3);
    \draw (4.5, 2.5) node {$b+1$};
    \draw (7.5, 2.5) node {$b+1$};

    \draw (2.75, 1.75) node {$...$};
    \draw (9.25, 1.75) node {$...$};
    \draw (-0.25, 0.25) node {$...$};
    \draw (12.25, 0.25) node {$...$};
\end{tikzpicture}
\caption{Upper part of the picture shows the segment $\Theta_{i,i+2}$ assigned to the vertex $v$. Lower part contains
  possible positions for the vertex $u$ (child of $v$ in the tree $D$) covered by two segments $\Theta_{i-1,i+1}$ and $\Theta_{i+1,i+3}$.}
\label{rys:podzialy_podwojne}
\end{center}
\end{figure}
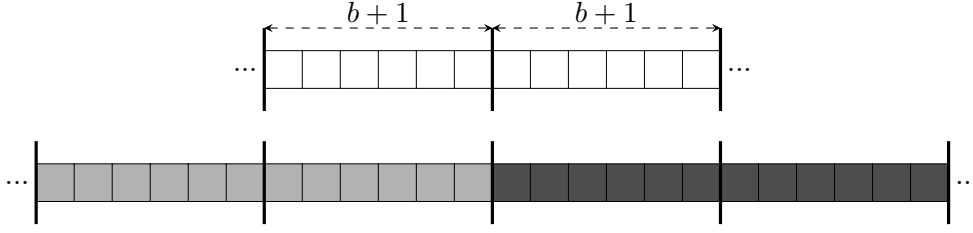

Note that Lemma \ref{istniejasegmenty} implies that for every $b$-ordering $\pi$ there is a generated segment assignment consistent with $\pi$. 
Note that this is exactly the assignment described in the proof of Lemma \ref{istniejasegmenty}.

The second phase is fed with generated segment assignments on-line. Therefore the first phase uses polynomial space. 
In Step \ref{alg:rek} of the algorithm we have two possibilities for every inner non-root vertex. For the root we have
$\lceil\frac{n}{b+1}\rceil + 1 \leq n+1$ possibilities (additional $1$ comes from segment $\Theta_{-1,1}$, which is also
correct segment according to the Definition~\ref{def:segment}). We do not have a choice for leaves, 
thus the algorithm generates at most $(n+1)2^{n-1}$ assignments. This algorithm uses polynomial time
for every generated assignment, so this phase works in $O^*(2^n)$ time.

\subsection{Second phase: depth-first search}

In this phase for every generated segment assignment we check if there exists a $b$-ordering consistent with the segment assignment. Let us denote this
segment assignment by $\phi$.
This phase is very similar to the second phase of our $O^*(5^n)$ algorithm \cite{naszewg}, 
but adapted to the segments of size $2(b+1)$ and $4(b+1)$. This difference allows us to use Measure \& Conquer method in complexity analysis.
First, we recall some key definitions and facts from
the $O^*(5^n)$ algorithm.

We assign a vertex to each position one by one, but
the main idea is the order in which we fill those positions in.
For every position $i$, let
$\opseg(i) = \lceil \frac{i}{b+1} \rceil$ be the base segment number
of this position, and let $\opcol(i) = ((i-1) \mod (b+1)) + 1$ be
the index of the position in its base segment, which we call the {\em{color}}
of this position. Note that the color of position is the remainder of this
number modulo $b+1$, but in the range $[1,b+1]$ instead of $[0,b]$.  

Let us sort positions lexicographically according to pairs $(\opcol(i), \opseg(i))$.
To each of those positions we assign a vertex, in exactly this order. We 
call this ordering {\em{the color order}} of positions.

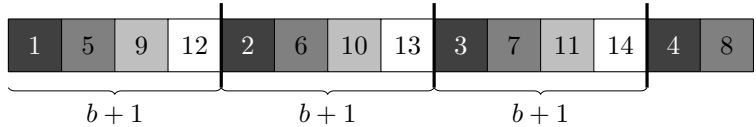
\begin{figure}[htbp]
\begin{center}
{\small{
\begin{tikzpicture}[scale=0.7]

    \foreach \x in {1, ..., 4}
    {
      \fill[darkgray] (\x * 4.0 - 4.0, 0.0) rectangle (\x * 4.0 - 3.0, 1.0);
      \draw[white] (\x * 4.0 - 3.5, 0.5) node {$\x$};
    }
    \foreach \x in {5, ..., 8}
    {
      \fill[gray] (\x * 4.0 - 5.0 * 4.0 + 1.0, 0.0) rectangle (\x * 4.0 - 5.0 * 4.0 + 2.0, 1.0);
      \draw (\x * 4.0 - 5.0 * 4.0 + 1.5, 0.5) node {$\x$};
    }
    \foreach \x in {9, 10, 11}
    {
      \fill[lightgray] (\x * 4.0 - 9.0 * 4.0 + 2.0, 0.0) rectangle (\x * 4.0 - 9.0 * 4.0 + 3.0, 1.0);
      \draw (\x * 4.0 - 9.0 * 4.0 + 2.5, 0.5) node {$\x$};
    }
    \foreach \x in {12, 13, 14}
    {
      \fill[white] (\x * 4.0 - 12.0 * 4.0 + 3.0, 0.0) rectangle (\x * 4.0 - 12.0 * 4.0 + 4.0, 1.0);
      \draw (\x * 4.0 - 12.0 * 4.0 + 3.5, 0.5) node {$\x$};
    }
  	\draw[step=1cm] (0,0) grid (14,1);
    \foreach \x in {1,2,3}
    {
      \draw[very thick] (\x * 4.0, -0.3) -- (\x * 4.0, 1.3);
      \draw[very thin, snake=brace, mirror snake] (\x * 4.0 - 4.0, -0.3) -- (\x * 4.0, -0.3);
      \draw (\x * 4.0 - 2.0, -0.8) node {$b+1$};
    }

\end{tikzpicture}
}}
\caption{Color order of positions for $n=14$ and $b=3$.}
\label{pierwszyrysunek}
\end{center}
\end{figure}

Following lemma is the key observation in our algorithm.

\begin{lemma}\label{lemat-najwazniejsze}
Ordering $\pi$
is a $b$-ordering iff for every edge $uv$ such that $\opseg(\pi(u)) < \opseg(\pi(v))$ we have $\opseg(\pi(u)) + 1 = \opseg(\pi(v))$ and $\opcol(\pi(u)) > \opcol(\pi(v))$.
\end{lemma}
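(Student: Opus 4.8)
The plan is to prove both directions of the equivalence directly from the definitions of $\opseg$ and $\opcol$, translating the bandwidth condition $|\pi(u)-\pi(v)|\le b$ into statements about the base-segment numbers and colors of the endpoint positions. First I would set up notation: for an edge $uv$, write $p=\pi(u)$ and $q=\pi(v)$, and without loss of generality assume $\opseg(p)\le\opseg(q)$. The core arithmetic fact I would establish is how $|p-q|$ relates to the difference of segment numbers. Since each base segment has exactly $b+1$ consecutive positions, two positions in the same base segment differ by at most $b$, while two positions in base segments that differ by $2$ or more differ by at least $b+1$ and hence violate the bandwidth bound. This immediately forces any edge of a $b$-ordering to have endpoints whose segment numbers differ by at most one.

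For the forward direction I would assume $\pi$ is a $b$-ordering and take an edge $uv$ with $\opseg(p)<\opseg(q)$. By the observation above, $\opseg(q)=\opseg(p)+1$, giving the first claimed equality. Then I would write both positions in terms of their segment and color: $p=(\opseg(p)-1)(b+1)+\opcol(p)$ and $q=(\opseg(q)-1)(b+1)+\opcol(q)=\opseg(p)(b+1)+\opcol(q)$. Subtracting gives
\[
q-p=(b+1)+\opcol(q)-\opcol(p).
\]
Since $\pi$ is a $b$-ordering we have $q-p\le b$, so $\opcol(q)-\opcol(p)\le b-(b+1)=-1$, i.e. $\opcol(p)>\opcol(q)$, which is the second claimed inequality.

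For the converse I would assume the stated condition holds for every edge and show every edge has length at most $b$. Take any edge $uv$; if the two endpoints lie in the same base segment their positions differ by at most $b$ and we are done, so assume they lie in different segments, say $\opseg(p)<\opseg(q)$. The hypothesis then gives $\opseg(q)=\opseg(p)+1$ and $\opcol(p)>\opcol(q)$. Using the same expansion as before, $q-p=(b+1)+\opcol(q)-\opcol(p)$, and since $\opcol(p)>\opcol(q)$ means $\opcol(q)-\opcol(p)\le-1$, we get $q-p\le b$; moreover $q-p\ge 1>0$ because the colors lie in $[1,b+1]$ and the segment gap contributes a positive $b+1$. Hence $|p-q|\le b$ for every edge, so $\pi$ is a $b$-ordering.

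The only mildly delicate point, and the step I would be most careful about, is the boundary bookkeeping in the color/segment decomposition: verifying that the identity $p=(\opseg(p)-1)(b+1)+\opcol(p)$ holds exactly under the chosen convention that colors run in $[1,b+1]$ rather than $[0,b]$, and that the claimed equivalence $\opcol(q)-\opcol(p)\le -1 \Leftrightarrow \opcol(p)>\opcol(q)$ is an integer inequality with no off-by-one slip. Once the decomposition identity is confirmed, both directions reduce to the single linear computation of $q-p$, so I expect no substantive obstacle beyond this arithmetic verification.
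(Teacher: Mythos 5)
Your proof is correct and follows essentially the same route as the paper's: both reduce the claim to the fact that for positions in adjacent base segments $q-p=(b+1)+\opcol(q)-\opcol(p)$, the paper arguing this via a figure and you via the explicit decomposition $p=(\opseg(p)-1)(b+1)+\opcol(p)$. Your write-up is somewhat more careful about the converse direction and about why segment numbers of adjacent vertices can differ by at most one, both of which the paper asserts with minimal justification.
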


\begin{proof}
Since $\pi$ is a $b$-ordering, for every edge $uv$ we have $|\opseg(\pi(u)) - \opseg(\pi(v))| \leq 1$. 
If $\opseg(\pi(u)) = \opseg(\pi(v))$ then $uv$ is not longer than $b$. Otherwise, suppose w.l.o.g. that 
$\opseg(\pi(u)) + 1 = \opseg(\pi(v))$. Note that the distance between positions with the same color
in the neighboring segments is exactly $b+1$, so $uv$ is not longer than $b$ iff $u$ has greater color than $v$ (see Figure \ref{drugirysunek}).
\end{proof}

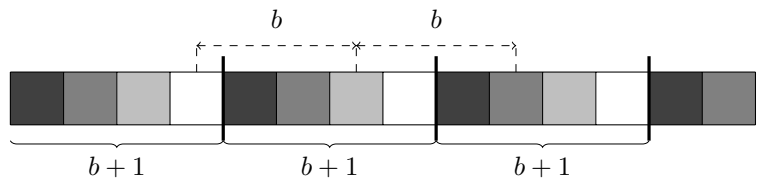
\begin{figure}[htbp]
\begin{center}
{\small{
\begin{tikzpicture}[scale=0.7]

    \foreach \x in {0, ..., 3}
    {
      \fill[darkgray] (\x * 4.0 + 0.0, 0.0) rectangle (\x * 4.0 + 1.0, 1.0);
    }
    \foreach \x in {4, ..., 7}
    {
      \fill[gray] (\x * 4.0 - 4.0 * 4.0 + 1.0, 0.0) rectangle (\x * 4.0 - 4.0 * 4.0 + 2.0, 1.0);
    }
    \foreach \x in {8, 9, 10}
    {
      \fill[lightgray] (\x * 4.0 - 8.0 * 4.0 + 2.0, 0.0) rectangle (\x * 4.0 - 8.0 * 4.0 + 3.0, 1.0);
    }
    \foreach \x in {11, 12, 13}
    {
      \fill[white] (\x * 4.0 - 11.0 * 4.0 + 3.0, 0.0) rectangle (\x * 4.0 - 11.0 * 4.0 + 4.0, 1.0);
    }
  	\draw[step=1cm] (0,0) grid (14,1);
    \foreach \x in {1,2,3}
    {
      \draw[very thick] (\x * 4.0, -0.3) -- (\x * 4.0, 1.3);
      \draw[very thin, snake=brace, mirror snake] (\x * 4.0 - 4.0, -0.3) -- (\x * 4.0, -0.3);
      \draw (\x * 4.0 - 2.0, -0.8) node {$b+1$};
    }
    \draw[dashed, very thin] (3.5,1.0) -- (3.5,1.5);
    \draw[dashed, very thin] (6.5,1.0) -- (6.5,1.5);
    \draw[stealth-stealth, <->, very thin, dashed] (3.5, 1.5) -- (6.5, 1.5);
    \draw (5.0, 2) node {$b$};
    \draw[dashed, very thin] (9.5,1.0) -- (9.5,1.5);
    \draw[stealth-stealth, <->, very thin, dashed] (6.5, 1.5) -- (9.5, 1.5);
    \draw (8.0, 2) node {$b$};
\end{tikzpicture}
}}
\caption{Picturable proof of the Lemma \ref{lemat-najwazniejsze}\label{drugirysunek}}
\end{center}
\end{figure}

\begin{corollary}\label{najwazniejsze}
Ordering $\pi$, is a $b$-ordering iff for every edge $uv$ satisfying
$\opseg(\pi(u)) + 1 = \opseg(\pi(v))$ vertex
$u$ is assigned to greater position in the color order than vertex $v$ and for every edge
$uv$ values $\opseg(\pi(u))$ and $\opseg(\pi(v))$ differ by at most one.
\end{corollary}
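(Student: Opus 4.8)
The plan is to derive the corollary directly from Lemma \ref{lemat-najwazniejsze} by translating the color-based inequality $\opcol(\pi(u)) > \opcol(\pi(v))$ into the language of the color order. First I would split the characterization of Lemma \ref{lemat-najwazniejsze} into two separate requirements: (a) for every edge $uv$ the values $\opseg(\pi(u))$ and $\opseg(\pi(v))$ differ by at most one, and (b) for every edge $uv$ with $\opseg(\pi(u)) + 1 = \opseg(\pi(v))$ we have $\opcol(\pi(u)) > \opcol(\pi(v))$. Requirement (a) already appears verbatim in the statement of the corollary, so the entire task reduces to showing that requirement (b) is equivalent to the assertion that $u$ occupies a greater position than $v$ in the color order.

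The core of the argument is to unwind the definition of the color order. Recall that positions are sorted lexicographically by the pair $(\opcol(i), \opseg(i))$, so the position $\pi(u)$ lies after $\pi(v)$ in the color order exactly when either $\opcol(\pi(u)) > \opcol(\pi(v))$, or else $\opcol(\pi(u)) = \opcol(\pi(v))$ together with $\opseg(\pi(u)) > \opseg(\pi(v))$. I would then fix an edge $uv$ satisfying the hypothesis $\opseg(\pi(u)) + 1 = \opseg(\pi(v))$, which in particular forces $\opseg(\pi(u)) < \opseg(\pi(v))$. Under this hypothesis the second lexicographic case can never arise, since it would demand $\opseg(\pi(u)) > \opseg(\pi(v))$. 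Consequently, given the segment hypothesis, $\pi(u)$ follows $\pi(v)$ in the color order if and only if $\opcol(\pi(u)) > \opcol(\pi(v))$, which is precisely requirement (b).

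To finish, I would reassemble the pieces: by the equivalence just established, condition (b) holds across all relevant edges if and only if $u$ is assigned to a greater color-order position than $v$ for every edge with $\opseg(\pi(u)) + 1 = \opseg(\pi(v))$, while condition (a) is carried over without change. Combining these two observations with Lemma \ref{lemat-najwazniejsze} yields exactly the statement of the corollary.

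I do not anticipate a genuine obstacle, as the corollary is essentially a reformulation of the lemma in terms of the assignment order rather than raw colors. The only point demanding care is the tie-breaking case $\opcol(\pi(u)) = \opcol(\pi(v))$ in the lexicographic comparison, where one must explicitly invoke the strict inequality $\opseg(\pi(u)) < \opseg(\pi(v))$ to rule it out; this is what secures the clean equivalence between \emph{greater color} and \emph{greater color-order position} and makes the whole reduction go through.
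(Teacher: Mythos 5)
Your proposal is correct and follows exactly the route the paper intends: the paper states Corollary \ref{najwazniejsze} with no separate proof, treating it as an immediate reformulation of Lemma \ref{lemat-najwazniejsze}, and your argument simply fills in the routine verification that, under the hypothesis $\opseg(\pi(u)) + 1 = \opseg(\pi(v))$, the lexicographic comparison in the color order collapses to the comparison of colors.
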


\begin{definition}
By {\em{state candidate}} we denote partial function $s:V \to \{\Theta_i: 0 \leq i < \lceil \frac{n}{b+1} \rceil\}$, i.e., 
   assigning base segments to some of the vertices, such that for every vertex $v$ if $s(v)$ is defined then $s(v) \subset \phi(v)$ (where $\phi$ is the fixed assignment for this phase).
\end{definition}

Note, that for an inner vertex $v$ of the tree $\drzewo$ there are two possible values for the state candidate $s$ and there are four possible values
for $v$ being a leaf.

\begin{definition}
A {\em{state}} is a state candidate $s$ satisfying the following conditions:
\begin{enumerate}
\item Vertices from $\dom(s)$ can be assigned to the first $|\dom(s)|$ positions in the color order,
  in a way compatible with $s$.\label{state:1}
\item For every edge $uv$ one of the following holds.\label{state:2}
  \begin{enumerate}
    \item Both $s(u)$ and $s(v)$ are undefined.
    \item Exactly one value among $\{s(u), s(v)\}$ is defined; say $s(v) = \Theta_i$ and $s(u)$ is undefined. Then if $\phi(u) = \Theta_{(k, l)}$ we have $k \leq i$.
    \item Both $s(u)$ and $s(v)$ are defined; say $s(v) = \Theta_i$ and $s(v) = \Theta_k$. Then $|i-k| \le 1$. \label{state:2c}
  \end{enumerate}
\end{enumerate}
\end{definition}

\begin{definition}
We say that a state $s'$ is an {\em{extension}} of some state $s$ if there exists vertex $v$ such that:
\begin{enumerate}
\item $s(v)$ is undefined and $s'(v)$ is defined;\label{extension:1}
\item $\dom(s') = \dom(s) \cup \{v\}$ and $s = s'|_{\dom(s)}$;\label{extension:2}
\item If $uv$ is an edge in the {\bf whole graph $G$} and $s'(u) = \Theta_k$ and $s'(v) = \Theta_i$, then $k-1 \leq i \leq k$.\label{extension:3}
\end{enumerate}
\end{definition}

\begin{lemma}\label{l:iff1}
Let $\pi$ be a $b$-ordering compatible with the given segment assignment $\phi$. By $s_k$ for $0 \leq k \leq n$ we denote the state candidate, assigning
$\opseg(\pi(v))$ to every vertex $v$ assigned to one of the first $k$ positions in the color order (i.e., $|\dom(s_k)| = k$). Then every $s_k$ is a state
and for every $0 \leq k < n$ state $s_{k+1}$ is an extension of state $s_k$.
\end{lemma}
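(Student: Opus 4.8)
The plan is to let the $b$-ordering $\pi$ itself be the witness for every existential clause, and to reduce each base-segment proximity requirement in the definitions of \emph{state} and \emph{extension} to Corollary~\ref{najwazniejsze} (equivalently Lemma~\ref{lemat-najwazniejsze}). I would prove the two assertions separately: first that every $s_k$ is a state, then that $s_{k+1}$ is an extension of $s_k$. Throughout I write the \emph{base-segment index} of a vertex $v\in\dom(s_k)$ for the subscript of $s_k(v)$, which is determined by $\opseg(\pi(v))$; differences of base-segment indices coincide with differences of the corresponding $\opseg$ values, so Corollary~\ref{najwazniejsze} applies verbatim to them, and for the two endpoints of an edge the color order places the one lying in the \emph{larger} base segment \emph{earlier}.

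That each $s_k$ is a legal state candidate is immediate: for $v\in\dom(s_k)$ compatibility of $\pi$ with $\phi$ gives $\pi(v)\in\phi(v)$, and since $\phi(v)$ is a union of consecutive base segments, the base segment of $\pi(v)$ lies inside $\phi(v)$, i.e. $s_k(v)\subset\phi(v)$. Condition~\ref{state:1} is witnessed directly by $\pi$: by definition $\dom(s_k)$ is the $\pi$-preimage of the first $k$ positions of the color order, and $\pi$ sends each such $v$ to a position whose base segment is $s_k(v)$. For Condition~\ref{state:2} I split on an edge $uv$. If both endpoints are defined, the required bound of at most one on the difference of their base-segment indices is exactly $|\opseg(\pi(u))-\opseg(\pi(v))|\le 1$ from Corollary~\ref{najwazniejsze}. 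If exactly one is defined, say $s_k(v)$ has index $i$ and $s_k(u)$ is undefined with $\phi(u)=\Theta_{(k',l')}$, I must show $k'\le i$: since $\pi(u)\in\phi(u)$ the base segment of $u$ is one of those comprising $\Theta_{(k',l')}$, so its index is at least $k'$; Corollary~\ref{najwazniejsze} bounds it above by $i+1$, and the borderline value $i+1$, where $\opseg(\pi(u))=\opseg(\pi(v))+1$, is excluded because it would place $u$ before $v$ in the color order, contradicting $v\in\dom(s_k)$ and $u\notin\dom(s_k)$. Hence the index of $u$ is at most $i$ and $k'\le i$.

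For the extension, let $v$ be the unique vertex that $\pi$ puts at the $(k+1)$-st position of the color order; then $s_k$ and $s_{k+1}$ differ exactly in that $v$ becomes defined, so Conditions~\ref{extension:1} and~\ref{extension:2} hold by construction. For Condition~\ref{extension:3}, consider an edge $uv$ of $G$ with $s_{k+1}(u)$ and $s_{k+1}(v)$ both defined, of indices $k''$ and $i$; then $u\in\dom(s_k)$, so $u$ precedes $v$ in the color order. Corollary~\ref{najwazniejsze} gives $|k''-i|\le 1$, and the case $i=k''+1$, i.e. $\opseg(\pi(v))=\opseg(\pi(u))+1$, is impossible, since it would place $v$ before $u$ in the color order. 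Combining, $k''-1\le i\le k''$, as required.

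The step I expect to be the main obstacle is this directional bookkeeping in the one-sided subcase of Condition~\ref{state:2} and in Condition~\ref{extension:3}: one must pair the \emph{symmetric} bound $|\opseg(\pi(u))-\opseg(\pi(v))|\le 1$ with the \emph{antimonotone} dependence of color-order position on base-segment number supplied by Corollary~\ref{najwazniejsze}, keeping careful track of which endpoint occupies the higher base segment and therefore precedes the other. Once this sign analysis is set up correctly, each asymmetric proximity constraint reduces to a contradiction with the order in which $s_k$ fills positions, and everything else is bookkeeping.
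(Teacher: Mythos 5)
Your proof is correct and follows essentially the same route as the paper's: both reduce every proximity condition in the definitions of state and extension to Corollary~\ref{najwazniejsze}, combined with the observation that for an edge whose endpoints lie in adjacent base segments, the endpoint in the larger segment comes earlier in the color order. You spell out the directional bookkeeping (and the state-candidate legality $s_k(v)\subset\phi(v)$) in more detail than the paper does, but the underlying argument is identical.
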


\begin{proof}
This is a straightforward corollary from Corollary \ref{najwazniejsze}. Point \ref{state:1} of the state definition is obviously satisfied by the definition of $s_k$.
Now lets look at any $s_k$ and edge $uv$. If both $s_k(u)$ and $s_k(v)$ are undefined, Point \ref{state:2} of the state definition is satisfied.
If both are defined, Corollary \ref{najwazniejsze} clearly implies Point \ref{state:2}. Assume then that $s_k(v) = \Theta_i$ and $s_k(u)$ is undefined. By construction
of state candidates $s_k$, $\pi(u)$ is later in the color order than $\pi(v)$. Therefore, by Corollary \ref{najwazniejsze}, $\opseg(\pi(u)) \leq \opseg(\pi(v))$. Since
$\pi(u) \in \opseg(\pi(u)) \subset \phi(u)$, it implies the condition in Point \ref{state:2} of the state definition.

Let us now prove that  $s_{k+1}$ is an extension of the state $s_k$. Points \ref{extension:1} and \ref{extension:2} of the extension definition follow directly from the construction of $s_k$.
Note that if $v$ is the vertex defined by $s_{k+1}$ but not by $s_k$, then $\pi(v)$ is later in the color order than any $u$ defined by $s_k$. 
Therefore, if we take any $uv$ as in Point \ref{extension:3} of the extension definition, by Corollary \ref{najwazniejsze} the condition in Point \ref{extension:3} is satisfied.
\end{proof}

\begin{lemma}\label{l:iff2}
Let $s_k$ for $0 \leq k \leq n$ be a set of states such that $s_{k+1}$ is an extension of state $s_k$ for every $0 \leq k < n$. Let $v_k$ be the vertex defined
by $s_k$, but not by $s_{k-1}$. Then ordering $\pi$ which assigns position $k$ in the color order to the vertex $v_k$ is a $b$-ordering.
\end{lemma}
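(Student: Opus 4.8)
The plan is to establish the converse of Lemma~\ref{l:iff1}: given the chain of states, I verify the two conditions of Corollary~\ref{najwazniejsze} for the induced ordering $\pi$, which then certifies that $\pi$ is a $b$-ordering. First I would check that $\pi$ is a genuine ordering at all. Each extension enlarges the domain by exactly one vertex (Points~\ref{extension:1}--\ref{extension:2}), so $|\dom(s_k)| = |\dom(s_0)| + k$; since $\dom(s_n) \subseteq V$ forces $|\dom(s_0)| = 0$, we get $|\dom(s_k)| = k$, the vertices $v_1, \dots, v_n$ are pairwise distinct, and they exhaust $V$. Hence $\pi$ is a bijection from $V$ onto the positions, with $v_k$ placed at the $k$-th position in the color order.

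The crux is a \emph{compatibility claim}: $\pi(v_k) \in s_n(v_k)$, i.e. the greedy placement of $v_k$ at the $k$-th color-order position actually lands inside the base segment the final state assigns to $v_k$. I would prove this by counting. Condition~\ref{state:1}, applied to $s_k$, provides a segment-preserving bijection between $\dom(s_k)$ and the first $k$ color-order positions; because ``compatible'' means ``same base segment'', such a bijection exists only if the block sizes match, that is $|\{v \in \dom(s_k) : s_k(v) = \Theta_i\}|$ equals the number of the first $k$ color-order positions lying in $\Theta_i$, for every $i$. Since $s_k$ and $s_{k-1}$ agree on $\dom(s_{k-1})$ (Point~\ref{extension:2}), subtracting these equalities for consecutive values of $k$ shows that the $k$-th color-order position lies in $\Theta_i$ exactly when $s_k(v_k) = \Theta_i$; as $s_n$ restricts to $s_k$ on $v_k$, this gives $\pi(v_k) \in s_n(v_k)$. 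I expect this translation step — turning the purely existential Point~\ref{state:1} into a statement about our one specific placement — to be the main obstacle; note it also handles the possibly shorter final segment automatically, since it counts actual positions.

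With compatibility in hand, both requirements of Corollary~\ref{najwazniejsze} follow from Point~\ref{extension:3}. Fix an edge $uv$ of $G$ and suppose its later endpoint in the color order is $v = v_{k'}$, so its neighbour $u = v_k$ satisfies $k < k'$ and thus $u \in \dom(s_{k'-1})$. Applying Point~\ref{extension:3} to the extension $s_{k'}$ that adds $v$ forces the base-segment index of $s_n(v)$ to lie in $\{\,\iota(u)-1,\ \iota(u)\,\}$, where $\iota(u)$ denotes the index of $s_n(u)$. In particular the two indices differ by at most one, which through the compatibility claim yields $|\opseg(\pi(u)) - \opseg(\pi(v))| \le 1$ for every edge, the second condition of Corollary~\ref{najwazniejsze}. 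For the first condition, assume $\opseg(\pi(u)) + 1 = \opseg(\pi(v))$, i.e. $u$ sits in the strictly smaller segment; were $v$ the later endpoint, the inequality just derived would force $v$'s index to be at most $u$'s, contradicting $\iota(v) = \iota(u)+1$. Hence $u$ must occupy the later color-order position, precisely as Corollary~\ref{najwazniejsze} demands. Collecting both conditions, Corollary~\ref{najwazniejsze} certifies that $\pi$ is a $b$-ordering, completing the proof.
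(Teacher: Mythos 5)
Your proof is correct and takes essentially the same route as the paper: reduce everything to the two conditions of Corollary~\ref{najwazniejsze} and extract them from Point~\ref{extension:3} of the extension definition (the paper phrases this as a proof by contradiction and also invokes Point~\ref{state:2c}, but that is a cosmetic difference). The one substantive addition is your explicit counting argument for the compatibility claim $\pi(v_k) \in s_n(v_k)$ via Point~\ref{state:1} --- the paper uses this fact silently --- and that is a worthwhile clarification rather than a deviation.
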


\begin{proof}
We use Corollary \ref{najwazniejsze} once again. Let us prove the thesis by contradiction. Assume that edge $uv$ is longer than $b$ and w.l.o.g. $\pi(u) < \pi(v)$.
Since $\pi(u) + b < \pi(v)$, $\opseg(\pi(u)) < \opseg(\pi(v))$. If $\opseg(\pi(u)) + 1 < \opseg(\pi(v))$, then $s_n$ is not a state due to Point \ref{state:2c} of
the state definition. Therefore $\opseg(\pi(u)) + 1 = \opseg(\pi(v))$ and, by Corollary \ref{najwazniejsze}, $\pi(u)$ is earlier in the color order than $\pi(v)$.
But then there exists $s_k$ for which $s_k(u)$ is defined, $s_k(v)$ is undefined but $s_{k+1}(v)$ is defined. Thus we have contradiction 
with Point \ref{extension:3} of the extension definition for the edge $uv$.
\end{proof}

The algorithm for second phase is now quite clear. Given segment assignment $\phi$, we do depth-first search over states, looking for sequence
of states and extensions as in Lemmas \ref{l:iff1} and \ref{l:iff2}. More precisely:
\begin{enumerate}
\item Start with a state $\emptyset$.
\item Having state $s$ with $|\dom(s)| = k < n$, try to create extension $s'$ of $s$ in every possible way, i.e., try to assign every undefined vertex in $s$
to the base segment, where $(k+1)$-th position in the color order lies.
\item If we reach state $s$ with $|\dom(s)| = n$, construct the $b$-ordering $\pi$ using the DFS stack (that is, states $\emptyset = s_0$, $s_1$, \ldots, $s_n=s$) and return it.\label{alg2:true}
\item If no state $s$ with $|\dom(s)| = n$ is reached, return that there is no $b$-ordering compatible with $\phi$.\label{alg2:false}
\end{enumerate}

Note that Lemmas \ref{l:iff1} and \ref{l:iff2} imply that this algorithm returns correct $b$-ordering in Step \ref{alg2:true} and there are no correct $b$-orderings
if the algorithm reaches Step \ref{alg2:false}. Therefore we proved that this algorithm is correct. In Section \ref{s:analiza} we prove the $O(4.83^n)$ time bound and $O^*(4^n)$ space bound.

\section{Time and space analysis}\label{s:analiza}

\subsection{Memory bound}

This analysis is fairly easy. Note that only non-polynomial space used in the algorithm is the space used to track visited states in the depth-first search.
We try to bound the number of visited states for one fixed segment assignment $\phi$ by $4^n$. This implies $O^*(4^n)$ space bound.

\begin{lemma}\label{lemma:mem}
In one run of the second phase, the algorithm visits at most $3^{n-L} 4^L \leq 4^n$ states, where $L$ is the number of leaves in the tree $\drzewo$.
\end{lemma}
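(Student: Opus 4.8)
The plan is to bound the number of distinct visited states by an injective encoding into a product space with one factor per vertex. Fix the segment assignment $\phi$. Every visited state is in particular a state candidate, so it is a partial function $s$ with $s(v)\subset\phi(v)$; thus each coordinate $s(v)$ is either ``undefined'' or one of the base segments contained in $\phi(v)$. Sending $s$ to the tuple $(s(v))_{v\in V}$ is clearly injective, since two different states differ in at least one coordinate, so it suffices to bound the number of tuples that actually occur. For an inner vertex $\phi(v)$ has size $2(b+1)$ and hence contains exactly two base segments (Definition \ref{def:segment}), giving at most $3$ values per coordinate: undefined, or one of the two segments. For a leaf $\phi(v)$ has size $4(b+1)$ and contains four base segments, so the naive count is $5$ values and the product bound is only $3^{\,n-L}5^{L}$.

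To sharpen the leaf factor from $5$ to $4$ I would exploit condition \ref{state:1} of the state definition. By Lemma \ref{l:iff1} the domain of a state is filled strictly in the color order, so $\dom(s)$ always consists of the first $|\dom(s)|$ positions in that order; consequently the number of vertices that $s$ maps into any fixed base segment $\Theta_j$ is completely determined by $k=|\dom(s)|$ (it equals the number of the first $k$ color-order positions lying in $\Theta_j$). This ``profile'' rigidity, combined with the parent--leaf edge constraint in conditions \ref{state:2c} and \ref{extension:3}, lets me merge one of the five possibilities for a leaf with another: writing $\phi(v)=\Theta_{(i-1,i+3)}$ for a leaf with parent $u$, the edge $uv$ forces the indices of $s(v)$ and $s(u)$ to differ by at most one whenever both are defined, while $s(u)$ ranges only over the two base segments of $\phi(u)$. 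The plan is to use this, together with the fact that the profile pins down how many vertices already sit in each segment, to set up for each leaf a four-symbol code whose decoding (reading off the rest of the tuple) uniquely recovers whether the leaf is undefined and, if defined, which segment it occupies. Summing, this yields an injection into a space of size $3^{\,n-L}4^{L}$.

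Finally $3^{\,n-L}4^{L}\le 4^{\,n-L}4^{L}=4^{n}$ because $3\le 4$, which gives the claimed space bound. The main obstacle is precisely the leaf reduction of the previous paragraph: taken in isolation a single leaf really can appear undefined and can also appear in each of its four base segments in some reachable state (its parent only needs to lie in an adjacent base segment, which the color order permits), so the saving cannot come from any per-leaf argument. It must instead be extracted from the global rigidity of condition \ref{state:1}, and verifying that the four-symbol leaf code stays injective once all coordinates are read together, rather than merely plausible, is the delicate step I expect to require the most care.
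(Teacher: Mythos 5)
Your proposal has a genuine gap at exactly the step you flag as delicate: the reduction from five to four possibilities per leaf is never actually carried out, and the route you sketch for it (a ``four-symbol code'' extracted from the global profile rigidity of condition~\ref{state:1}) is left unverified. Worse, the reason you give for taking that complicated route is a misdiagnosis. You argue that ``the saving cannot come from any per-leaf argument'' because a single leaf, ranging over all reachable states, really can be undefined or sit in any of its four base segments. That observation is true but irrelevant: the counting in the target bound $3^{n-L}4^L$ is a product over the tree taken in root-to-leaf order, where each vertex's options are counted \emph{conditioned on the value already chosen for its parent}. A leaf $v$ with $\phi(v)=\Theta_{(i-1,i+3)}$ has a parent $u$ with $\phi(u)=\Theta_{(i,i+2)}$, and $uv$ is an edge of $G$, so Point~\ref{state:2} of the state definition applies directly: if $s(u)$ is undefined then $s(v)\neq\Theta_{i-1}$ (Point~2(b) forces $i\le j$ when $s(v)=\Theta_j$); if $s(u)=\Theta_i$ then $s(v)\neq\Theta_{i+2}$; and if $s(u)=\Theta_{i+1}$ then $s(v)\neq\Theta_{i-1}$ (both by Point~\ref{state:2c}). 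Hence for each of the three choices at the parent there are at most four at the leaf, and the product $(\le 3$ per inner vertex$)\times(\le 4$ per leaf, given its parent$)$ gives $3^{n-L}4^L$ with no appeal to condition~\ref{state:1}, to Lemma~\ref{l:iff1}, or to any global decoding argument. You confused the unconditional range of a single coordinate (which is indeed $5$) with the conditional count that the product bound actually requires (which is $4$).

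The rest of your write-up is fine --- the injective tuple encoding, the factor $3$ for inner vertices, and the final inequality $3^{n-L}4^L\le 4^n$ all match the paper --- but as it stands the proof is incomplete, since the only nontrivial part of the lemma is deferred to an unproven and unnecessarily heavy construction. Replace the second paragraph with the conditional case analysis above and the argument closes.
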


\begin{proof}
Let $s$ be a visited state and $v$ be a vertex.

If $v$ is an inner vertex of $\drzewo$ with $\phi(v) = \Theta_{(i, i+2)}$, then $s(v)$ is either undefined or $s(v) = \Theta_i$ or $s(v) = \Theta_{i+1}$ - three possibilities.

If $v$ is a leaf of $\drzewo$ with $\phi(v) = \Theta_{(i-1, i+3)}$, then $s(v)$ is either undefined or one of the base segments $\Theta_{i-1}$, $\Theta_i$, $\Theta_{i+1}$, $\Theta_{i+2}$ - five possibilities, too many.
However, lets look at the parent $u$ of the vertex $v$ in the tree $\drzewo$. By construction, $\phi(u) = \Theta_{(i, i+2)}$. Note that, by the definition of the state, since
$uv$ is an edge of $G$:
\begin{enumerate}
\item If $s(u)$ is undefined, then $s(v)$ cannot be $\Theta_{i-1}$.
\item If $s(u) = \Theta_i$, then $s(v)$ cannot be $\Theta_{i+2}$.
\item If $s(u) = \Theta_{i+1}$, then $s(v)$ cannot be $\Theta_{i-1}$.
\end{enumerate}
Therefore, in every case, there are only four possibilities for a leaf $v$. This leads to $3^{n-L}4^L$ bound for the number of valid states.
\end{proof}

\subsection{Time bound}

Note that search for possible expansions of given state and checking if a state candidate is a state can be done in polynomial time. 
Therefore the time used by the whole algorithm is bounded by $O((s(G) + 2^n)n^\gamma)$ where $\gamma$ is a constant and $s(G)$ is the number of visited states.
The $2^n$ factor is due to the first phase of the algorithm. Now we focus on bounding $s(G)$, i.e., the number of visited states in the run of the whole algorithm.

\begin{lemma}\label{lemma:time}
The algorithm in the whole run visits at most $3(n+1)\kappa^n$ states for some constant $\kappa < 4.83$.
\end{lemma}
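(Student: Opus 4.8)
**The plan is to bound the number of visited states by a clever weighting (Measure & Conquer), counting each state by summing independent per-vertex contributions and then taking the product over all segment assignments.** The total count $s(G)$ is the sum over all generated segment assignments $\phi$ of the number of states visited in the corresponding run of the second phase. The first phase generates at most $(n+1)2^{n-1}$ assignments, so I would first try to show that, after summing the per-assignment state counts, the branching structure of the whole algorithm (first phase choices $\times$ second phase choices) collapses into a product of per-vertex factors to which Measure & Conquer applies.

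\medskip

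The key idea is that for a fixed $\phi$, Lemma~\ref{lemma:mem} already tells us each inner vertex contributes one of three base-segment values (or undefined) and each leaf one of four; the refinement needed for the time bound is to observe that not all of these combinations are simultaneously realizable as \emph{visited} states, because of the extension constraints (Point~\ref{extension:3}) propagating along tree edges. I would assign a weight $w \in (0,1)$ to the ``undefined'' status and weight $1$ to each defined base-segment value of a vertex, so that the number of states with a given pattern of defined/undefined vertices is captured by a product $\prod_v (\text{local factor})$. Combining the first-phase branching (two choices per inner non-root vertex, one per leaf, $n+1$ for the root) with the second-phase per-vertex state multiplicities, each \emph{inner} vertex should contribute a factor like $2 \cdot (1 + \text{something})$ and each leaf a factor like $1 \cdot 4$ or its weighted analogue; the root contributes the extra $(n+1)$ and explains the leading $3(n+1)$ constant. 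The goal is to choose the weight $w$ so that every per-vertex factor is bounded by the same base $\kappa$, yielding $s(G) \le 3(n+1)\kappa^n$.

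\medskip

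Concretely, I expect to set up a recurrence over the tree $\drzewo$ in the root-to-leaf (DFS) order, where a subtree rooted at $v$ contributes a value depending only on whether $v$ is inner or a leaf and on the state of its parent. The crux is that a leaf naively costs a factor combining its $2$ first-phase segment choices with its $4$ second-phase values, i.e. up to $8$, which is far worse than $4.83$; but a leaf has \emph{no first-phase branching} (its segment is forced by its parent, condition~4 of the segment assignment), so its true contribution is just the $4$ of Lemma~\ref{lemma:mem}, while an inner vertex's $2 \cdot 3 = 6$ contribution must be amortized below $\kappa^2$ across the tree. The Measure \& Conquer step is exactly the optimization that redistributes these uneven local costs — charging part of an inner vertex's cost to its children — so that the geometric mean per vertex drops below $4.83$.

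\medskip

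\textbf{The main obstacle will be making the amortization rigorous and decoupling the two phases.} A state visited in the second phase is counted once for each first-phase assignment $\phi$ it is compatible with, and the defined/undefined frontier of a visited state is not arbitrary: it is a down-closed set in the color order, which interacts subtly with the tree structure. The hard part is to show that summing over $\phi$ and over all visited states factors cleanly into a per-vertex product after introducing the weights, rather than producing cross terms between a vertex and its tree-neighbors that resist a single exponential base. I anticipate that the authors handle this by defining a potential/measure $\mu$ that counts undefined vertices with a fractional weight, proving that each extension step increases a suitable counting quantity by a controlled factor, and then solving a small optimization to pin down the best weight, from which $\kappa < 4.83$ emerges as the optimal value.
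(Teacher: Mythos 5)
Your overall shape is right in one respect: the paper does count pairs $(\phi,s)$ jointly, by a single generator that walks $\drzewo$ root-to-leaf and at each vertex $v$ enumerates the admissible combinations of $\phi(v)$ \emph{and} $s(v)$ given $(\phi(u),s(u))$ at the parent $u$. This is precisely what dissolves the ``cross terms'' and the double-counting over $\phi$ that you flag as the main obstacle, and your observation that a leaf contributes a factor $4$ (no first-phase branching, four residual state values as in Lemma~\ref{lemma:mem}) is correct. But there is a genuine gap at the combinatorial core: you take the contribution of an inner vertex to be $2\cdot 3=6$ and hope to amortize it below $4.83$. No weighting scheme can do that, because on a tree whose internal part is a long path almost every vertex is inner, and a uniform per-vertex factor of $6$ gives $6^{n-o(n)}$. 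The missing ingredient is the case analysis showing that the state conditions (the edge constraints between $s(u)$ and $s(v)$, together with $s(v)\subset\phi(v)$) cut the six combinations down to at most \textbf{five} for every inner child, and down to \textbf{four} exactly when the parent satisfies $s(u)=\Theta_i$ with $\phi(u)=\Theta_{(i,i+2)}$, i.e.\ when the parent is placed in the \emph{left} base segment of its double segment. That $5$-versus-$4$ dichotomy, not the defined/undefined dichotomy, is the entire source of the improvement from $5^n$ to $4.83^n$.

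Consequently the measure you propose --- a fractional weight on the ``undefined'' status of a vertex --- tracks the wrong quantity. With it, a vertex whose parent has a \emph{defined} state cannot be distinguished according to whether that state is the left or the right base segment of the parent's segment, so you are forced to assume the worse branching factor ($5$) in both subcases and the analysis collapses back to $5^n$ at best. The paper instead weights an unprocessed vertex $v$ by $1$, $\alpha$, or $\beta$ according to whether its analyzed parent has $s(u)$ undefined, equal to the left base segment, or equal to the right base segment; the recurrences record how many of the (at most five) child options land in each category, and the numerical optimum turns out to be $\beta=1$ and $\alpha\approx 0.8805$ --- that is, only the $4$-option ``left'' category is discounted. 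Your worry about the visited states forming a down-closed frontier in the color order is a non-issue: the paper simply bounds the number of \emph{all} locally consistent pairs $(\phi,s)$, a superset of the visited states, and never needs that structure.
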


\begin{proof}
Let us do indepth analysis of the number of possible states similar, but more broad, to that in the proof of Lemma \ref{lemma:mem}.
Let $s$ be a state visited while considering segment assignment $\phi$.

Let $u_0$ be the root of $\drzewo$. Then $\phi(u_0)$ has at most $\lceil \frac{n}{b+1} \rceil + 1$ possible values. 
If $\phi(u_0) = \Theta_{(i, i+2)}$, then $s(u_0)$ is either undefined or equal to $\Theta_i$ or $\Theta_{i+1}$. 
In total, $3(n+1)$ possibilities for $u_0$.

Let $u$ be a parent of a leaf $v$. Then there are at most four possibilities to choose for $\phi(v)$ and $s(v)$, 
knowing $\phi(u)$ and $s(u)$. The analysis is the same as in the proof of Lemma \ref{lemma:mem}.

Let now $v$ be an inner vertex with parent $u$. Let $\phi(u) = \Theta_{(i, i+2)}$. Then 
$\phi(v)$ is either $\Theta_{(i-1, i+1)}$ or $\Theta_{(i+1, i+3)}$, by the way we construct $\phi$.
The following restrictions are implied by the fact that $uv$ is an edge of $G$ and
by the state definition.

\renewcommand{\theenumi}{\Alph{enumi}}
\renewcommand{\labelenumi}{\Alph{enumi}.}

\begin{enumerate}
\item If $s(u)$ is undefined:\label{a:opt:u}
  \begin{enumerate}
    \item If $\phi(v) = \Theta_{(i-1, i+1)}$, then $s(v)$ is undefined or $s(v) = \Theta_i$. State definition forbids possibility that $s(v) = \Theta_{i-1}$.
    \item If $\phi(v) = \Theta_{(i+1, i+3)}$, then $s(v)$ is undefined, $s(v) = \Theta_{i+1}$ or $s(v) = \Theta_{i+2}$. 
  \end{enumerate}
  In total, {\bf{five}} possibilities for $\phi(v)$ and $s(v)$.
\item If $s(u) = \Theta_i$:\label{a:opt:l}
  \begin{enumerate}
    \item If $\phi(v) = \Theta_{(i-1, i+1)}$, then $s(v)$ is undefined, $s(v) = \Theta_{i-1}$ or $s(v) = \Theta_i$.
    \item If $\phi(v) = \Theta_{(i+1, i+3)}$, then $s(v) = \Theta_{i+1}$. State definitions forbids possibilities that $s(v)$ is undefined and that $s(v) = \Theta_{i+2}$.
  \end{enumerate}
  In total, {\bf{four}} possibilities for $\phi(v)$ and $s(v)$.
\item If $s(u) = \Theta_{i+1}$:\label{a:opt:r}
  \begin{enumerate}
    \item If $\phi(v) = \Theta_{(i-1, i+1)}$, then $s(v)$ is undefined or $s(v) = \Theta_i$. State definition forbids possibility that $s(v) = \Theta_{i-1}$. 
    \item If $\phi(v) = \Theta_{(i+1, i+3)}$, then $s(v)$ is undefined, $s(v) = \Theta_{i+1}$ or $s(v) = \Theta_{i+2}$.
  \end{enumerate}
  In total, {\bf{five}} possibilities for $\phi(v)$ and $s(v)$.
\end{enumerate}

In every option we got at most five options for $\phi(v)$ and $s(v)$ values for every inner, non-root vertex $v$. Together with
four possibilities for leaves and $3(n+1)$ for root, this proves that the algorithm visits at most $3(n+1)5^n$ states.

However, there are few places where we have four, not five possibilities for $\phi(v)$ and $s(v)$: when $v$ is a leaf or when
the parent of $u$ is assigned by the state to the left (smaller position numbers) half of its segment (Option \ref{a:opt:l}). Moreover, in every moment, when we
have five possibilities for $\phi(v)$ and $s(v)$, vertex $v$ might be assigned to the left (smaller position numbers) half of its segment, which gives us Option \ref{a:opt:l}
for the analysis of children of $v$. This leads us to the conclusion that we can use Measure \&{} Conquer technique to obtain better bound.

The Measure \&{} Conquer method was introduced by Fomin, Grandoni and Kratsch (see \cite{fgk:mis}). 
As in the above analysis, we analyze vertices in the root-to-leaf order. We use Measure \&{} Conquer method to estimate
number of possible states.

Let us consider hypothetical state generator that generates possible states (i.e., pairs of functions $\phi$ and $s$), by analyzing
the tree $\drzewo$ in the root-to-leaf order. The generator first sets $\phi(u_0)$ and $s(u_0)$ in every possible way (at most $3(n+1)$ ways).
Then, while analyzing vertex $v$ with already set $\phi(u)$ and $s(u)$ for parent $u$ of $v$, it assigns $\phi(v)$ and $s(v)$ in every
possible way, keeping in mind limitations described above, both for $v$ being a leaf and inner vertex. We use Measure \&{} Conquer to estimate
number of generated states by this generator.

At any step of the generator we measure the weight of the current problem instance, i.e., already constructed functions $\phi$ and $s$.
The weight of the instance is the sum of weights of vertices. Let $\alpha, \beta \in [0,1]$ be constants to be defined lated. 
The weight of the vertex $v$ is:

\renewcommand{\theenumi}{\Roman{enumi}}
\renewcommand{\labelenumi}{\Roman{enumi}.}
\begin{enumerate}
\item $0$, if $v$ is already analyzed;
\item $1$, if $v$ is not analyzed and parent of $v$ is not analyzed or $v$ is the root of $\drzewo$;
\item $1$, if $v$ is not analyzed, parent $u$ of $v$ is analyzed and $s(u)$ is undefined;\label{x:opt:1}
\item $\alpha$, if $v$ is not analyzed, parent $u$ of $v$ is analyzed and $\phi(u) = \Theta_{(i, i+2)}$ and $s(u) = \Theta_i$ for some integer $i$.\label{x:opt:2}
\item $\beta$, if $v$ is not analyzed, parent $u$ of $v$ is analyzed and $\phi(u) = \Theta_{(i, i+2)}$ and $s(u) = \Theta_{i+1}$ for some integer $i$.\label{x:opt:3}
\end{enumerate}

Let $T(w)$ be a bound for number of states generated by the generator from point, where the size of the instance is at most $w$. Now we estimate $T$, 
using aforementioned limitations for generator choices.

Let $v$ be a non-root vertex currently analyzed by the generator, where vertex $u$ is its parent.
If $v$ is a leaf, there are always four possibilities for $\phi(v)$ and $s(v)$, independent of whether $v$ falls into Category \ref{x:opt:1}, \ref{x:opt:2}, \ref{x:opt:3}, i.e.,
whether $v$ weights $1$, $\alpha$ or $\beta$. Therefore
$$T(w) \leq \max (4T(w-1), 4T(w-\alpha), 4T(w-\beta)).$$

Now lets look at the case when $v$ is an inner vertex. Let $\phi(u) = \Theta_{(i, i+2)}$.
\renewcommand{\theenumi}{\arabic{enumi}}
\renewcommand{\labelenumi}{\arabic{enumi}.}
\begin{enumerate}
\item If $s(u)$ is undefined, $v$ has got weight $1$ and we have five possibilities for $\phi(v)$ and $s(v)$. In two of them, $v$ falls under Category \ref{x:opt:1} for children of $v$,
  in two --- under Category \ref{x:opt:3}, and in one --- under Category \ref{x:opt:2}. Vertex $v$ becomes analyzed and has weight $0$. Since there is at least one child of vertex $v$,
  the following bound holds in this case:
  $$T(w) \leq 2T(w-1) + 2T(w - 1 - (1 - \beta)) + T(w-1-(1-\alpha)).$$
\item If $s(u) = \Theta_i$, $v$ has got weight $\alpha$ and we have four possibilities for $\phi(v)$ and $s(v)$. In one of them, $v$ falls under Category \ref{x:opt:1} for children of $v$,
  in one --- under Category \ref{x:opt:3}, and in two --- under Category \ref{x:opt:2}. Vertex $v$ becomes analyzed and has weight $0$. Since there is at least one child of vertex $v$,
  the following bound holds in this case:
  $$T(w) \leq T(w-\alpha) + T(w - \alpha - (1-\beta)) + 2T(w - \alpha - (1-\alpha)).$$
\item If $s(u) = \Theta_{i+1}$, $v$ has got weight $\beta$ and we have five possibilities for $\phi(v)$ and $s(v)$. In two of them, $v$ falls under Category \ref{x:opt:1} for children of $v$,
  in two --- under Category \ref{x:opt:3}, and in one --- under Category \ref{x:opt:2}. Vertex $v$ becomes analyzed and has weight $0$. Since there is at least one child of vertex $v$,
  the following bound holds in this case:
  $$T(w) \leq 2T(w-\beta) + 2T(w - \beta - (1 - \beta)) + T(w-\beta-(1-\alpha)).$$
\end{enumerate}

By searching the space of possible values $\alpha$ and $\beta$ and by solving the above equations numerically,
we got that for $\alpha = 0.8805$ and $\beta = 1$ function $T(w)$ is bounded by $\kappa^n$ for $\kappa \sim 4.828485 < 4.83$.
This completes the proof.
\end{proof}

The following theorem is a straightforward corollary from Lemmas \ref{lemma:mem} and \ref{lemma:time}.
\begin{theorem}
There exists an algorithm that solves the {\sc{Bandwidth}} problem in $O(4.83^n)$ time and $O^*(4^n)$ space.
\end{theorem}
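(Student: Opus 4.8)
The plan is to read Lemmas~\ref{lemma:mem} and~\ref{lemma:time} as bounds on the number of states the algorithm manipulates, and then convert those counts into the claimed time and space bounds by invoking the per-state cost estimate and the memory-reuse structure already established at the start of Section~\ref{s:analiza}. The theorem carries no new combinatorial content; it is the bookkeeping consequence of the two lemmas.

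First I would settle the space bound. The observation recorded at the beginning of the memory subsection is that the only superpolynomial storage is the table of visited states maintained during a single run of the depth-first search for one fixed segment assignment $\phi$, and that the first phase emits these assignments to the second phase on-line. Hence at any moment only the states for one $\phi$ reside in memory, and Lemma~\ref{lemma:mem} bounds this by $3^{n-L}4^L \le 4^n$. Since every other part of the computation is polynomial and the state table for each $\phi$ is discarded before the next assignment is processed, the whole algorithm runs in $O^*(4^n)$ space.

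Next I would handle the time bound. Recall the remark opening the time subsection: generating a state, testing whether a state candidate is a state, and computing its extensions all take polynomial time, so the total running time is $O((s(G)+2^n)n^\gamma)$, where $s(G)$ is the number of states visited over the entire run and the $2^n$ term accounts for the first phase. Lemma~\ref{lemma:time} gives $s(G)\le 3(n+1)\kappa^n$ with $\kappa<4.83$, so substituting yields a total running time of $O\big((3(n+1)\kappa^n+2^n)n^\gamma\big)$.

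The one step that needs a word of care, though it is routine, is passing from this $O^*$-style count to the clean exponential bound $O(4.83^n)$. Because $\kappa<4.83$ strictly, the ratio $(4.83/\kappa)^n$ grows exponentially and eventually dominates $3(n+1)n^\gamma$, so $3(n+1)\kappa^n n^\gamma=O(4.83^n)$; the $2^n n^\gamma$ term is negligible against this. The running time is therefore $O(4.83^n)$, and combined with the space bound this gives the theorem. I do not anticipate any genuine obstacle here: the only substantive work lies inside Lemmas~\ref{lemma:mem} and~\ref{lemma:time}, and the theorem simply records their joint consequence.
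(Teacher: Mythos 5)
Your proposal is correct and follows exactly the paper's route: the theorem is stated there as a direct corollary of Lemmas~\ref{lemma:mem} and~\ref{lemma:time}, combined with the remarks opening Section~\ref{s:analiza} (the on-line feeding of segment assignments for the space bound and the $O((s(G)+2^n)n^\gamma)$ accounting for the time bound). Your added observation that the strict inequality $\kappa<4.83$ absorbs the polynomial factors is the right way to make the ``straightforward corollary'' fully explicit.
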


\bibliographystyle{plain}
\bibliography{bandwidth-mc}

\end{document}